\documentclass[twocolumn]{autart}    
\usepackage{amsmath}
\usepackage{amssymb}
\usepackage{graphicx}
\usepackage{bm,bbm}
\usepackage{placeins} 
\usepackage{subfigure} 
\usepackage{overpic}
\usepackage{algorithm}

\usepackage{algorithmic}
\usepackage{enumitem}
\usepackage{xcolor}
\usepackage{hyperref}

\usepackage[normalem]{ulem}
\usepackage[round]{natbib}
\newcommand{\beq}{\begin{equation}}
	\newcommand{\eeq}{\end{equation}}
\newcommand{\bqa}{\begin{eqnarray}}
	\newcommand{\eqa}{\end{eqnarray}}
\newcommand{\nn}{\nonumber}

\newcommand{\bra}[1]{ \langle{#1} |}
\newcommand{\ket}[1]{ |{#1} \rangle}

\newcommand{\sq}[1]{\left[ {#1} \right]}

\newcommand{\tr}[1]{{\rm Tr}\sq{ {#1} }}

\hypersetup{hidelinks}

\journal{}  

\makeatletter
\def\ps@copyright{%
  \let\@mkboth\@gobbletwo
  \let\@oddhead\@empty
  \let\@evenhead\@empty
  \def\@oddfoot{%
    \parbox[t]{\textwidth}{%
      \centering\normalfont\scriptsize
      \textcopyright~2026.
      This manuscript version is made available under the\\
      \href{https://creativecommons.org/licenses/by-nc-nd/4.0/}
           {CC BY-NC-ND 4.0 license.}\\[3pt]
      \thepage
    }%
  }%
  \let\@evenfoot\@oddfoot
}
\makeatother

\begin{document}

\begin{frontmatter}

\title{Efficient quantum state tomography with two complementary  projective   measurements \thanksref{footnoteinfo}}               

\thanks[footnoteinfo]{This research was supported by the Innovation Program for Quantum Science and Technology (No. 2023ZD0301400) and the National Natural Science Foundation of China (No. 12205219, No. 62503036, No. 62273016).}

\author[College]{Xiang Li\thanksref{equal}},    
\author[Hangzhou,Beijing]{Yong Wang\thanksref{equal}},
\thanks[equal]{Equally contributed to this work.} 
\author[Taiyuan]{Lijun Liu\thanksref{corresponding}}\ead{lljcelia@126.com},  
\author[College,Laboratory]{Yiguang Hong\thanksref{corresponding}}\ead{yghong@iss.ac.cn},
\author[Hangzhou,Beijing,BJ,HZ]{Qing Gao},
\author[College,Laboratory]{Shuming Cheng\thanksref{corresponding}}\ead{drshuming.cheng@gmail.com}
\thanks[corresponding]{Corresponding authors.} 

\address[College]{Department of Control Science and Engineering, Tongji University, Shanghai, 201804, China}          
\address[Hangzhou]{Hangzhou Innovation Institute of Beihang University, Zhejiang, 310051, China}
\address[Beijing]{School of Automation Science and Electrical Engineering, Beihang University, Beijing, 100191, China}
\address[Taiyuan]{Department of Mathematical Sciences, Shanxi Normal University, Taiyuan, 030006, China} 
\address[Laboratory]{State Key Laboratory of Autonomous Intelligent Unmanned Systems, Shanghai Research Institute for Intelligent Autonomous Systems, Tongji University, Shanghai, 201203, China}  
\address[BJ]{Embodied Intelligence Robotics Institute, Beihang University, Beijing, 100191, China}
\address[HZ]{Zhejiang Key Laboratory of Industrial Big Data and Robot Intelligent Systems, Zhejiang, 310051, China}

\begin{keyword}                        
Quantum state tomography; Kirkwood-Dirac quasiprobability; Measurement complementarity; Logistic regression. 
\end{keyword}                             

\begin{abstract}
Quantum state tomography (QST) is of fundamental importance to characterize quantum systems in quantum information processing, but its practical implementation is severely hindered by the exponential scaling of measurement and computational costs. In this paper, we present a novel QST protocol that utilizes Kirkwood-Dirac (KD) quasiprobability to reconstruct quantum states. First, it enables state reconstruction with only two complementary rank-one projective  measurements, thus significantly reducing the measurement cost. Then, a complex logistic regression estimator is proposed to process collected KD data, together with  a projected gradient algorithm to mitigate numerical instability and to accelerate convergence. The product-operator structure of KD quasiprobability is further exploited to reduce the computational cost. Finally, extensive experiments are implemented to confirm the validity of our protocol. Notably, the full reconstruction of randomly generated 15-qubit mixed-state instances can be accomplished within 20 minutes under the GPU implementation. These results suggest a promising route toward scalable QST and benchmarking large-scale quantum systems.
\end{abstract}

\end{frontmatter}
\endNoHyper   
\section{Introduction}

Rapid progress recently achieved in quantum information technology has generated a pressing need for reliable methods to characterize quantum devices~\citep{blume2025}. Quantum system identification, inferring quantum properties from measurement data, provides a cornerstone for this demand and covers quantum state tomography (QST)~\citep{Gebhart2023}, quantum process tomography~\citep{KiShSi2014,yu2020}, and quantum detector tomography~\citep{Endo2021,Xiao2022}. Among these, QST that aims to reconstruct unknown quantum states from measurement statistics is the most fundamental, as it is indispensable to others and underpins wide applications ranging from benchmarking and controlling quantum systems~\citep{Proctor2025}, to verification of quantum teleportation~\citep{Bao2012} and to quantum error mitigation~\citep{Cai2023}.

The standard QST workflow consists of first generating data from measurements and then applying an inversion algorithm to recover quantum states from collected data. However, it faces severe challenges in the cost of measurement and computational complexity. For $n$-qubit systems, conventional measurement schemes require $3^n$ Pauli observables and optimized ones need $\mathcal{O}(rd\log^2 d)$ for rank-$r$ states with $d\!=\!2^n$~\citep{Gross2010}, all scaling exponentially  with $n$ and become impractical at large size. Besides, the computational complexity poses an equal bottleneck. Typical examples of maximum-likelihood estimator (MLE)~\citep{CG-APG} and   linear-regression estimator (LRE)~\citep{PLS-Qi2013, mu2020, guctua2020}   involve computation on matrices with dimensionality scaling to $\mathcal{O}(d^2)$, incurring the overwhelming computational cost. Though gradient-based methods provide tractable alternatives to MLE, they often suffer from slow convergence and numerical instabilities in high dimension~\citep{Sproj1,Sproj2}. The LRE method avoids iterative optimization but tends to incur a loss of
accuracy~\citep{PLS-Qi2013,acharya2019}. These issues heavily limit the utility and scalability of QST.  

To alleviate the above issues, we present a new QST protocol that utilizes Kirkwood-Dirac (KD) quasiprobability~\citep{Kirkwood1933, Dirac1945} to reconstruct quantum states. Importantly, the proposed protocol is measurement-efficient in the sense that two complementary rank-one projective measurements are sufficient to determine the unknown state~\citep{Chaturvedi2006,Arvidsson2024}, rather than exponentially many required by conventional ones~\citep{CG-APG}. Moreover, it does not need any prior structural assumptions about the target state, going beyond methods tailored to states with certain structure, such as compressed sensing for low-rank states~\citep{Hsu2024,Hu2025,wang2024quantum}, permutational invariance~\citep{toth2010}, tensor models for low-entanglement states~\citep{qin2024}, and neural states~\citep{Neugebauer2020,Li2023yong}. Thus, our protocol admits near-minimal experimental reconfiguration, significantly reducing measurement burden for experiments.

We then develop  a complex logistic regression estimator (CLRE), together with a projected gradient descent (PGD) algorithm, to efficiently accomplish the KD-based QST.  Specifically, the former can process negative and/or complex KD quasiprobabilities directly, based on an algorithmic surrogate model, while the latter can mitigate numerical instability and speed up convergence.  Furthermore, the tensor-product structure of KD quasiprobabilities  generated from local measurements  is exploited to reduce the per-iteration computational complexity from $\mathcal{O}(d^4)$ to $\mathcal{O}(d^3)$.

We finally conduct extensive experiments through numerical simulations and on the IBM Qiskit simulator to validate our KD-based protocol. Results demonstrate that the CLRE achieves an empirical speedup of up to 100-fold compared with the standard MLE in the tested benchmarks~\citep{CG-APG}. Notably, it is able to reconstruct randomly generated 15-qubit mixed states within 20 minutes under the GPU implementation, whereas previous works report several hours for 14-qubit arbitrary mixed states~\citep{Hou2016} and over 30 minutes for 12-qubit low-rank states~\citep{Hsu2024}. 

This paper is structured as follows. First, Section~\ref{sec:preliminaries} gives a brief introduction to QST and KD quasiprobability, and Section~\ref{sec:feasibility KD} shows that KD quasiprobability is suitable for QST. Then, Section~\ref{sec:MLE KD} proposes CLRE for the KD-based QST, while Section~\ref{sec:PGD} presents the PGD algorithm, the computational complexity of which is analyzed in Section~\ref{sec:computational complexity}. Finally, Section~\ref{sec:Numerical exp} provides experimental validation, and Section~\ref{sec:Conclusion} concludes with this work.

\textbf{Notation:} A complex matrix $X$ with $m$ rows and $n$ columns belongs to the linear space $\mathbb{C}^{m\times n}$, and its trace is denoted by $\tr{X}$. 
The identity and zero matrices are denoted by $\mathbb{I}$ and $\mathbf{0}$, respectively; their dimensions are dropped for simplicity. 
For a square matrix $X$, the notation $X\succeq\mathbf{0}$ indicates that $X$ is positive semidefinite. The symbol $\otimes$ represents the tensor product. 
A quantum state vector is written in bra-ket notation as $|\phi\rangle$, with $\langle\phi|=(|\phi\rangle)^\dagger$. 
The standard Pauli operators for qubits are $\sigma_x=\left(\begin{smallmatrix}0&1\\1&0\end{smallmatrix}\right)$, 
$\sigma_y=\left(\begin{smallmatrix}0&-i\\i&0\end{smallmatrix}\right)$, and 
$\sigma_z=\left(\begin{smallmatrix}1&0\\0&-1\end{smallmatrix}\right)$.

\section{Preliminaries}\label{sec:preliminaries}

\subsection{Quantum state and measurement}
 The state of any quantum system is described by a density operator $\rho$ on the $d$-dimensional Hilbert space $\mathcal{H}$,  which is positive semidefinite with unit trace,
\begin{equation}
	\rho \in \mathcal{L}:=\left\{\rho \in \mathbb{C}^{d \times d}~|~ \rho \succeq \mathbf{0}, ~\mathrm{Tr}[\rho]= 1\right\}.  \label{eqa:rho}
\end{equation}
It immediately follows that $d^2-1$ parameters are generally required to specify $\rho$. For an $n$-qubit system, the dimension $d=2^n$ grows exponentially with $n$, so does the number of state parameters.

Measurements of quantum observables are essential to recover state information. Specifically, an observable is described by a Hermitian operator $O$, and its measurement process is modeled as a positive operator-valued measure (POVM) $ \{E_i\}_{i=1,\dots,I}$ with $E_i\succeq \mathbf{0}$ and $\sum_{i=1}^{I}E_i=\mathbb{I}_d$. Ideally, repeating this process many times produces statistics described by a probability vector $\bm{p}=(p_1,\dots, p_I)^\top$,  governed by the Born's rule
\begin{equation}
	p_i=\mathrm{Tr}[\rho\,E_i] \in [0, 1],~~~\sum_{i=1}^I p_i=1. \label{probability}
\end{equation}
Further, combining outcome probabilities with their eigenvalues  gives rise to  the expectation of observable $O:=\sum_i\lambda_i\,E_i$ as
\begin{equation}
	 o:=\mathrm{Tr}[\rho\,O]=\sum_i \lambda_i\mathrm{Tr}[\rho E_i]=\sum_i \lambda_i\,p_i. \label{expectation}
\end{equation}
If all measurement elements in a POVM are projectors $\Pi_i$ and satisfy the orthogonality, i.e., for all $i, i^\prime=1,\ldots,I$, $\Pi_i\succeq \mathbf{0},\quad\sum_{i=1}^I \Pi_i=\mathbb{I}_d,\quad\mathrm{and}\quad
\Pi_i\Pi_{i^\prime} = \delta_{ii^\prime}\Pi_i$, then it becomes a projective measurement $\{\Pi_i\}_{i=1,\dots, I}$.

\subsection{Quantum state tomography}

QST aims to recover an unknown state from measurement statistics.  It proceeds in two steps.  First, a measurement strategy is implemented to generate measurement statistics described by probabilities in~(\ref{probability}) and/or expectations in~(\ref{expectation}). In practice, the collected data suffer from noise induced by state preparation and measurement errors. Then, a suitable state estimator is applied to find an estimate whose predicted statistics approximately match the observed data. Formally, this state reconstruction task can be stated as

\newtheorem{probb}{\bf Problem}
\begin{probb}
	Given observed probabilities $\hat{p}_1,\dots, \hat{p}_K$ (or expectations $\hat{o}_1,\dots,\hat{o}_K$),  find a physical state $\tilde\rho^{\star}\in\mathcal{L}$ such that the estimated  $\tilde{p}_k=\tr{\tilde\rho^{\star}\,E_k}$ (or $\tilde{o}_k=\tr{\tilde\rho^{\star}\,O_k}$)  is close to $\hat{p}_k$ (or $\hat{o}_k) $ for $k =1,\ldots,K$. \label{eqa:optimi}
\end{probb}

Whether the above problem admits a unique solution is fundamentally determined by whether the underlying measurement is informationally complete or not. 

\newtheorem{df}{\bf Definition}
\begin{df}[Informational completeness]\label{df:IC}
	Denote by $\mathcal{B}(\mathcal{H})$ the space of bounded linear operators acting on $\mathcal{H}$. An operator set $\{O_k\}$ is said to be informationally complete on $\mathcal{B}(\mathcal{H})$ if it provides a representation of any operator $X\in\mathcal{B}(\mathcal{H})$, and if any two distinct operators $X_1,X_2\in\mathcal{B}(\mathcal{H})$ admit different representations.
	In the QST setting considered in this paper, it reduces to the injectivity of the representation map $\rho\mapsto\{\tr{\rho O_k}\}$ on $\mathcal{L}\subset\mathcal{B}(\mathcal{H})$. Equivalently, for any $\rho_1\neq\rho_2\in\mathcal{L}$, there exists at least one $k$ such that $\tr{\rho_1 O_k}\neq \tr{\rho_2 O_k}$~\citep{prugovevcki1977}.
\end{df}

The informational completeness condition ensures that a unique state can be identified from measurement statistics $\{\tr{\rho O_k}\}_{k=1}^K$, and is distinct from the measurement completeness condition $\sum_i E_i=\mathbb{I}$ imposed for $\{E_i\}$ to be a valid POVM. Informational completeness requires at least $K\geq d^2-1$ linearly independent Hermitian operators $O_k$ to reconstruct any $d$-dimensional quantum state, in addition to the unit-trace constraint imposed in~(\ref{eqa:rho}). A common informationally complete choice for $n$-qubit systems is the Pauli set $O_{\bm{i}}=\bigotimes_{j=1}^n \sigma_{i_j}$ with $\sigma_{i_j}\in\{\mathbb{I},\sigma_x,\sigma_y,\sigma_z\}$, having $4^n=d^2$ observables in total.

Problem~\ref{eqa:optimi} can be tackled as an optimization problem using a suitable state estimator.  A typical example is  MLE which minimizes the cross-entropy between the observed and estimated probability distributions~\citep{Smolin2012, CG-APG}, and the corresponding problem is reformulated as
\begin{probb}
	Given observed probabilities $\hat{p}_1,\dots, \hat{p}_K$ and the associated measurement operators $E_1, \dots, E_K$, the maximum likelihood estimate is obtained from 
	\begin{align}
		\tilde\rho^{\star}&=\underset{\tilde{\rho}\in\mathcal{L}}{\mathrm{argmin}}~f_{\rm MLE}(\tilde{\rho})\nn\\
        &=\underset{\tilde{\rho}\in\mathcal{L}}
        {\mathrm{argmin}}\!\left(\!-\sum_{k=1}^{K}\hat{p}_k\log\,\mathrm{Tr}[\tilde{\rho}\,E_k]\!\right)\!.\label{eqa:optimi mle}
	\end{align}\label{prob:mle}
\end{probb}
This optimization problem can be effectively solved by gradient-based algorithms~\citep{Sproj2,Sproj1,CG-APG} to obtain a (sub)-optimal estimate. 

Finally, the reconstruction accuracy in QST is usually assessed by the Uhlmann-Jozsa fidelity, which is defined as~\citep{fidelity1,fidelity2}
\begin{equation} \label{fidelity}
	F(\rho_1,\rho_2) = \left( \mathrm{Tr}\left[\sqrt{\sqrt{\rho_2}\rho_1\sqrt{\rho_2}}\right] \right)^2
\end{equation}
for two states $\rho_1$ and $\rho_2$.

\newtheorem{rek}{\bf Remark}
\begin{rek}\label{remark1}
 Throughout this paper, we denote $\rho$ the true target state, $\tilde{\rho}$ the estimate during optimization, and $\tilde\rho^{\star}$ the final state estimate. Note that $\tilde{\rho}$ may be unphysical, i.e., $\tilde{\rho}\notin\mathcal{L}$ as defined in~(\ref{eqa:rho}). Also note that $\rho$ is unknown during the estimation process as formulated in Problem~\ref{prob:mle} and is only available for performance evaluation via~(\ref{fidelity}).
\end{rek}

\subsection{Kirkwood-Dirac quasiprobability}

It is not straightforward in quantum theory to generalize measurement probability to a joint probability distribution for simultaneously measuring two observables because of measurement disturbance and operator noncommutativity. Instead, Kirkwood and Dirac introduced~\citep{Kirkwood1933,Dirac1945}
\beq
q_{a b}=\tr{ \rho\, \Pi_a \, \Pi_b},~~~\forall~~a, b\label{eqa:KD}
\eeq
as a quasiprobability representation for joint measurement statistics of two observables $O_1=\sum_a\lambda_a\,\Pi_a$ and $O_2=\sum_b\lambda_b\,\Pi_b$.  It shares many similarities with classical joint probabilities: it has unit-sum 
$\sum_{a, b}q_{a b}=1$,
and admits marginal probability distributions
$p_a=\sum_b q_{a b}\in [0, 1],~p_b=\sum_a q_{a b}\in [0, 1]$.

Notably, $q_{a b}$ could be negative or even complex, violating Kolmogorov's axioms and thus called ``quasiprobability". Its nonpositivity reflects nonclassical quantum features and underlies wide applications in quantum thermodynamics~\citep{Gherardini2024}, post-selected metrology~\citep{Arvidsson-Shukur2020}, and foundational studies~\citep{Budiyono2023quantifying,Liu2025}. In the following, KD quasiprobability distribution is shown to be a measurement-efficient tool for state reconstruction without any prior structural assumptions about the state.

\section{State reconstruction via KD quasiprobability}\label{sec:feasibility KD}

We first show that the KD quasiprobability distribution generated with any two complementary rank-one projective measurements provides a complete representation for all quantum states. 
\newtheorem{thmm}{\bf Theorem}
\begin{thmm} 
	  Consider an arbitrary pair of rank-one projective measurements $\{\Pi_a\}_{a=1}^d$ and $\{\Pi_b\}_{b=1}^d$ where rank-one projectors are explicitly expressed as $\Pi_a= \ket{\psi_a}\bra{\psi_a},\,\,\Pi_b= \ket{\phi_b}\bra{\phi_b}$. Assume further that they are complementary, i.e.,
	\beq
		\tr{\Pi_a\Pi_b} = |\langle \psi_a|\phi_b\rangle|^2 \neq 0,~~\forall~ a, b=1,\dots,d. \label{complemnatry}
	\eeq
  Then, we are able to obtain
	\begin{enumerate}
		\item[(i)]
		Any $\rho$ on the $d$-dimensional Hilbert space admits
		\begin{align}
			\rho=\sum_{a,b=1}^dq_{a b}\, \frac{\Pi_b\Pi_a}{\mathrm{Tr}[\Pi_b\Pi_a]}
		\end{align}
		with KD quasiprobability $q_{ab}$ as state coefficient.
		\item[(ii)] The KD-operator set $\{\Pi_b\Pi_a\}_{a,b=1}^d$ is informationally complete, and KD operators are orthogonal under the Hilbert–Schmidt inner product, i.e.,
		\begin{align}\label{orthogonal}
			\mathrm{Tr}\!\big[\Pi_b\Pi_a\,(\Pi_{b'}\Pi_{a'})^\dagger\big]
			=\delta_{aa'}\delta_{bb'}\,\mathrm{Tr}[\Pi_b\Pi_a],
		\end{align}
		for $a,a^{\prime},b,b^{\prime}=1,\ldots,d$.
	\end{enumerate}
	\label{thm:KD-completeness}
\end{thmm}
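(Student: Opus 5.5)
The plan is to prove the orthogonality relation in (ii) first and then derive everything else from it. The orthogonality makes the $d^2$ operators $\Pi_b\Pi_a$ an orthogonal basis of $\mathcal{B}(\mathcal{H})\cong\mathbb{C}^{d\times d}$. Part (i) then becomes the standard expansion in an orthogonal basis, and informational completeness follows immediately.

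\textbf{Orthogonality.} Write $\Pi_b\Pi_a=\langle\phi_b|\psi_a\rangle\,|\phi_b\rangle\langle\psi_a|$, so that $(\Pi_{b'}\Pi_{a'})^\dagger=\langle\psi_{a'}|\phi_{b'}\rangle\,|\psi_{a'}\rangle\langle\phi_{b'}|$. By cyclicity of the trace,
\begin{align*}
\mathrm{Tr}\big[\Pi_b\Pi_a(\Pi_{b'}\Pi_{a'})^\dagger\big]
=\langle\phi_b|\psi_a\rangle\langle\psi_{a'}|\phi_{b'}\rangle\,\langle\psi_a|\psi_{a'}\rangle\langle\phi_{b'}|\phi_b\rangle .
\end{align*}
Each of $\{|\psi_a\rangle\}$ and $\{|\phi_b\rangle\}$ is an orthonormal basis, because it comes from a rank-one projective measurement on a $d$-dimensional space. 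Hence the last two factors equal $\delta_{aa'}\delta_{bb'}$. When both deltas are one, the remaining product is $|\langle\psi_a|\phi_b\rangle|^2=\mathrm{Tr}[\Pi_b\Pi_a]$, which gives (\ref{orthogonal}).

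\textbf{Basis property.} By the complementarity assumption (\ref{complemnatry}), every norm $\mathrm{Tr}[\Pi_b\Pi_a]$ is nonzero. So the $d^2$ operators are nonzero and mutually orthogonal under the Hilbert--Schmidt inner product, hence linearly independent. Since $\dim\mathcal{B}(\mathcal{H})=d^2$, they form an orthogonal basis. This is the only place where complementarity is used, and it is the one essential point of the proof: without it, some $\Pi_b\Pi_a$ would vanish and the family would not span.

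\textbf{Part (i).} Expand any $X$ in this basis:
\begin{align*}
X=\sum_{a,b}\frac{\mathrm{Tr}[X(\Pi_b\Pi_a)^\dagger]}{\mathrm{Tr}[\Pi_b\Pi_a]}\,\Pi_b\Pi_a .
\end{align*}
Take $X=\rho$. Since $(\Pi_b\Pi_a)^\dagger=\Pi_a\Pi_b$, the coefficient numerator is $\mathrm{Tr}[\rho\,\Pi_a\Pi_b]=q_{ab}$, which is the definition (\ref{eqa:KD}). The scalar $\mathrm{Tr}[\Pi_b\Pi_a]$ equals $\mathrm{Tr}[\Pi_a\Pi_b]$ and is real. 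This yields the formula in (i).

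\textbf{Informational completeness.} The expansion above gives a representation of every operator $X$. Distinct operators have distinct coefficient vectors because the expansion in a basis is unique. In the sense of Definition~\ref{df:IC}, suppose $\rho_1\neq\rho_2$ had equal values $\mathrm{Tr}[\rho_i\Pi_a\Pi_b]$ for all $a,b$. Then the reconstruction formula in (i) would force $\rho_1=\rho_2$, a contradiction. Hence the map $\rho\mapsto\{\mathrm{Tr}[\rho\,\Pi_a\Pi_b]\}$ is injective, which is the informational completeness of $\{\Pi_b\Pi_a\}$.
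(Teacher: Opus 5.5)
Your proof is correct, but it runs in the opposite logical order from the paper's. The paper proves (i) by a direct identity. It writes $\rho=\mathbb{I}\rho\,\mathbb{I}=\sum_{a,b}\Pi_b\rho\,\Pi_a$, uses $\Pi_b\rho\,\Pi_a=\langle\phi_b|\rho|\psi_a\rangle\,|\phi_b\rangle\langle\psi_a|$ and $|\phi_b\rangle\langle\psi_a|=\Pi_b\Pi_a/\langle\phi_b|\psi_a\rangle$, and multiplies numerator and denominator by $\langle\psi_a|\phi_b\rangle$ to get $q_{ab}/\mathrm{Tr}[\Pi_b\Pi_a]$. In that route complementarity enters only through the nonvanishing denominators, injectivity is read off from the resulting formula, and orthogonality is merely asserted at the end. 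You instead prove orthogonality first, supplying the overlap computation the paper omits. You then combine nonzero norms with $\dim\mathcal{B}(\mathcal{H})=d^2$ to conclude that $\{\Pi_b\Pi_a\}$ is an orthogonal basis, and obtain (i) as the standard Hilbert--Schmidt expansion with coordinate $\mathrm{Tr}[(\Pi_b\Pi_a)^\dagger\rho]=q_{ab}$. The paper's route is shorter and constructive: it needs neither the orthogonality relation nor a dimension count, and it holds verbatim for any operator $X$ in place of $\rho$. Your route is more structural. It explains why the coefficients must be the KD quasiprobabilities, since they are coordinates in an orthogonal basis and unique by linear independence. It also actually proves the orthogonality claim in (ii). Finally, it gives a converse the paper does not state: if some overlap $\langle\psi_a|\phi_b\rangle$ vanished, the corresponding KD operator would be zero, the family would contain at most $d^2-1$ nonzero operators, and it could not span $\mathcal{B}(\mathcal{H})$.
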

\begin{pf} 
The measurement completeness condition give \citep{Chaturvedi2006, Arvidsson2024}
\begin{align}
	\rho=\mathbb{I}\, \rho\, \mathbb{I}&=\left(\sum_{b=1}^d\Pi_b\right)\,\rho\,\left(\sum_{a=1}^d \Pi_a\right)\nonumber\\
    &=\left(\sum_{b=1}^d|\phi_b\rangle\langle\phi_b|\right)\,\rho\,\left(\sum_{a=1}^d |\psi_a\rangle\langle \psi_a|\right)\nonumber\\
	&= \sum_{a, b=1}^d\frac{|\phi_b\rangle\langle{\psi_a}|\langle{\phi_b}|{\psi_a}\rangle}{\langle\psi_a|\phi_b\rangle\langle{\phi_b}|{\psi_a}\rangle} \langle{\phi_b}|\rho|\psi_a\rangle\langle\psi_a|\phi_b\rangle \nonumber\\
	&=\sum_{a, b=1}^{d} q_{a b}\,\frac{\Pi_b\Pi_a}{\mathrm{Tr}[\Pi_b\Pi_a]}.
    \label{eqa:kd representation}
\end{align}
Here the complementarity condition as Eq.~(\ref{complemnatry}) ensures that all denominators $\mathrm{Tr}[\Pi_b\Pi_a]$ are nonzero. It indicates that the KD quasiprobability distribution $q_{ab}$ is a feasible representation for any state $\rho$. 

Suppose then that two distinct states $\rho_1,\rho_2\in\mathcal{L}$ admit the same KD quasiprobability distribution $\{q_{ab}\}$. It follows from~(\ref{eqa:kd representation}) that this distribution gives rise to a unique state, thereby implying $\rho_1=\rho_2$ and contradicting the assumption. This is precisely the injectivity required in Definition~\ref{df:IC}, which establishes the informational completeness of these KD operators. Finally, the orthogonality condition in~(\ref{orthogonal}) follows straightforwardly from the orthogonality of projectors $\Pi_a$ and $\Pi_b$.
\hfill $\square$
\end{pf}

The above theorem establishes that KD quasiprobability provides an alternative representation tool for QST. Specifically, it establishes the theoretical foundation for reconstructing an arbitrary quantum state by using the KD quasiprobability distribution generated with two complementary rank-one projective measurements. Moreover, it is flexible because these two measurements can be chosen arbitrarily under the complementarity condition, making it suitable for QST under practical limitations.

\begin{rek}
It is remarked that generally the KD-operator set $\{\Pi_b\Pi_a\}$ is different from $\{\Pi_a\Pi_b\}$, because $\Pi_a$ and $\Pi_b$ do not necessarily commute.  Note that $\Pi_b\Pi_a$ is used as the KD operator in Theorem~\ref{thm:KD-completeness}, differing in the ordering of $\Pi_a\Pi_b$ in $q_{ab}$ as per~(\ref{eqa:KD}). One can also choose $\Pi_a\Pi_b$ as the KD operator, and this choice does not affect the subsequent analysis. Indeed, these two are related by Hermitian conjugation $(\Pi_b\Pi_a)^\dagger=\Pi_a\Pi_b$, and correspondingly, the KD quasiprobabilities satisfy
$\tr{\rho\,\Pi_a\Pi_b}=\tr{\rho\,(\Pi_b\Pi_a)^\dagger}
=\tr{\rho\,\Pi_b\Pi_a}^*$, with $(\cdot)^*$ denoting the complex conjugation.

\end{rek}

Finally, we rigorously formulate the problem of applying KD quasiprobability to QST, similar to Problem~\ref{eqa:optimi}.
\begin{probb}
	Given observed KD quasiprobabilities $\hat{q}_{1},\dots, \hat{q}_K$, find a physical state $\tilde\rho^{\star}\in\mathcal{L}$ such that the estimated  $\tilde{q}_k=\tr{\tilde\rho^{\star}\Pi_a\Pi_b}$ matches  $\hat{q}_{k}$ for $k=1,\dots, K$. \label{kdqst prob}
\end{probb}
The KD-based formulation allows for an informationally incomplete KD operator set, or a subset of KD coefficients with $K<d^2$. Therefore, the above problem may admit multiple solutions, similar to Problem~\ref{eqa:optimi}. Note that the joint outcome of $\{\Pi_a\}$ and $\{\Pi_b\}$ is recorded in the 2-tuple $(a, b)$ with $a=1,\dots, K_1$ and $b=1,\dots, K_2$.  For simplicity, each pair $(a, b)$ is relabeled by a single index $k=(a-1)K_2+b\in\{1,\ldots,K\}$, with $K=K_1K_2$, so that $k$ uniquely specifies a KD outcome, or equivalently, a KD operator.

\begin{rek}
	It follows from~(\ref{eqa:kd representation}) that it is seemingly straightforward to use KD quasiprobability to reconstruct quantum states. However, this is not the case. First, collected data are noisy in practice, so it is highly possible that the reconstructed state is unphysical. Second, it fails to achieve reliable tomography under limited observations, similar to linear inversion~\citep{PLS-Qi2013}. Thus, proper state estimators and algorithms are needed to solve Problem~\ref{kdqst prob}.
\end{rek}

\section{Complex logistic regression estimator}\label{sec:MLE KD}

Note that the cross-entropy in the MLE objective in~(\ref{eqa:optimi mle}) cannot be directly applied for negative and/or complex KD quasiprobabilities, because the statistical or physical interpretation of MLE is lost, although it could be extended to process negative and complex quantities. Thus, we propose using the CLRE to solve Problem~\ref{kdqst prob}, based on an algorithmic surrogate model.  

Let $\boldsymbol{q}=(q_1,\dots,q_K)^\top$ denote the quasiprobability vector. The mapping in CLRE first decomposes $\boldsymbol{q}$ into the real component ${\rm Re}(\boldsymbol{q})$ and the imaginary component ${\rm Im}(\boldsymbol{q})$, and then concatenates them into a real-valued vector 
\begin{align}
	\boldsymbol{r}=&\left[{\rm Re}(\boldsymbol{q}),{\rm Im}(\boldsymbol{q})\right]^\top \nonumber \\
	=&[{\rm Re}(q_1), \ldots, {\rm Re}(q_K), {\rm Im}(q_1), \ldots, {\rm Im}(q_K)]^\top.  \label{eqa:r}
\end{align}
By further applying the standard softmax mapping $\sigma(\cdot)$ to $\boldsymbol{r}$, we obtain a  $2K$-dimensional probability distribution $\boldsymbol{p}=(p_1,\dots,p_{2K})^\top$ with $p_i=\sigma(r_{i})=\frac{e^{r_i}}{\sum_{i=1}^{2K}e^{r_i}}\in(0,1)$, $i=1,\dots, 2K$. It follows immediately that the mapped vector belongs to the interior of the probability simplex $\Delta^{2K}:=\{\bm{p}\in\mathbb{R}^{2K}:p_i>0,\sum_i p_i=1\}$. 

Given any KD vector $\bm{q}\in\mathbb{C}^{K}$, the complex softmax mapping $\mathcal{M}:\mathbb{C}^{K}\!\mapsto\Delta^{2K}$ thus acts as
\begin{align}
\mathcal{M}(\bm{q})\!=\!\left[\!\frac{e^{{\rm Re}(q_1)}}{N}, \!\dots,\! \frac{e^{{\rm Re}(q_K)}}{N}, \frac{e^{{\rm Im}(q_1)}}{N},\!\dots,\! \frac{e^{{\rm Im}(q_K)}}{N}\!\right]^\top\mkern-15mu=\!\bm{p} \label{mapping2}
\end{align}
with the normalization factor $N=\sum_{k=1}^K\left[e^{{\rm Re}(q_k)}+e^{{\rm Im}(q_k)}\right]$. Correspondingly, Problem~\ref{kdqst prob} is reformulated into
\begin{align}\label{CLRE}
	\tilde\rho^{\star}=\underset{\tilde{\rho}\in\mathcal{L}}{\mathrm{argmin}}~f_{\rm CLRE}(\tilde{\rho})=\underset{\tilde{\rho}\in\mathcal{L}}{\mathrm{argmin}}~\Bigl(-\mathcal{M}(\boldsymbol{{\hat{q}}})^\top\ln\mathcal{M}(\boldsymbol{{\tilde{q}}})\Bigr)
\end{align}
with the observed vector $\boldsymbol{\hat{q}}$ and the estimated $\boldsymbol{\tilde{q}}$. The objective function can be expressed more explicitly as 
\begin{align}
	&f_{\rm CLRE}(\tilde{\rho})\nn\\
	=&-\mathcal{M}(\boldsymbol{{\hat{q}}})^\top\left[\ln\frac{e^{{\rm Re}(\tilde{q}_1)}}{\tilde{N}}, \ldots, \ln\frac{e^{{\rm Im}(\tilde{q}_K)}}{\tilde{N}}\right]^\top \label{eqa:KDmle}\\
	=&-\sum_{a,b}\Bigl[\frac{e^{{\rm Re}(\hat{q}_{ab})}}{\hat{N}}\,{\rm Re}(\tilde{q}_{ab})
	+\frac{e^{{\rm Im}(\hat{q}_{ab})}}{\hat{N}}\,{\rm Im}(\tilde{q}_{ab})\Bigr]+\ln\tilde{N}.\nonumber 
\end{align}

\begin{rek}
	There are many choices for the mapping $\mathcal{M}(\cdot)$ in CLRE to  transform complex-valued quasiprobability vector $\bm{q}$ into a normalized positive probability vector $\bm{p}$. Here, it works as logistic regression which typically employs a normalization function to map real-valued inputs to the range $[0, 1]$~\citep{lavalley2008,sperandei2014}, widely used for classification tasks in machine learning~\citep{sunarya2024}.  Note that this construction can be considered as an algorithmic surrogate model rather than one with clear physical interpretation. 
\end{rek}

Finally, it is proven that the complex softmax mapping $\mathcal{M}$ in~(\ref{mapping2}) is bijective once the common-shift degree of freedom~in~(\ref{eqa:r}) is fixed. This degree of freedom arises because adding a real constant to all components of $\boldsymbol{r}$ leaves the softmax output unchanged, which can be eliminated by fixing the gauge $S=\sum_k[{\rm Re}(q_k)+{\rm Im}(q_k)]$.

\newtheorem{propp}{\bf Proposition}
\begin{propp}\label{proposition}
 	 Let $\mathcal{Q}_S=\{\bm{q}\in\mathbb{C}^{K}:\sum_{k}[{\rm Re}(q_k)+{\rm Im}(q_k)]=S\}.$ For any fixed known real number $S$, the restricted mapping $\mathcal{M}|_{\mathcal{Q}_S}:\mathcal{Q}_S\mapsto\Delta^{2K}$ is bijective. 
\end{propp}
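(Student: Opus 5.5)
We prove injectivity and surjectivity separately, working through the real vector $\bm r\in\mathbb{R}^{2K}$ of~(\ref{eqa:r}). The map $\bm q\mapsto\bm r$ is a real-linear bijection $\mathbb{C}^K\to\mathbb{R}^{2K}$. Under it, $\mathcal{Q}_S$ corresponds exactly to the affine hyperplane $H_S=\{\bm r\in\mathbb{R}^{2K}:\sum_{i=1}^{2K}r_i=S\}$. It therefore suffices to show that the standard softmax $\sigma:\mathbb{R}^{2K}\to\Delta^{2K}$, restricted to $H_S$, is a bijection onto $\Delta^{2K}$. The composition with $\bm q\mapsto\bm r$ is then $\mathcal{M}|_{\mathcal{Q}_S}$.

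\emph{Injectivity.} Suppose $\sigma(\bm r)=\sigma(\bm r')$ with $\bm r,\bm r'\in H_S$. Then $e^{r_i}/N=e^{r'_i}/N'$ for all $i$. Taking logarithms gives $r_i-r'_i=\ln N-\ln N'=:c$, a constant independent of $i$. This is precisely the common-shift degree of freedom noted before the statement. Summing over $i$ gives $2Kc=\sum_i r_i-\sum_i r'_i=S-S=0$, so $c=0$ and $\bm r=\bm r'$. Hence $\bm q=\bm q'$.

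\emph{Surjectivity.} Take any $\bm p\in\Delta^{2K}$; since $\Delta^{2K}$ is defined with $p_i>0$, the logarithms exist. Set
\begin{align*}
r_i=\ln p_i-\frac{1}{2K}\sum_{j=1}^{2K}\ln p_j+\frac{S}{2K},\qquad i=1,\dots,2K.
\end{align*}
By construction $\sum_i r_i=S$, so $\bm r\in H_S$. Moreover $e^{r_i}=C\,p_i$ for a common constant $C>0$, and $\sum_i p_i=1$ gives $\sigma(\bm r)_i=Cp_i/C=p_i$. Finally, define $q_k=r_k+\mathrm{i}\,r_{K+k}$. Then $\bm q\in\mathcal{Q}_S$ and $\mathcal{M}(\bm q)=\bm p$. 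This construction also yields the explicit inverse map.

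The only delicate point is that the target must be the open simplex. Points of the closed simplex with some $p_i=0$ have no preimage, and the paper's definition of $\Delta^{2K}$ with strict positivity is exactly what makes $\ln p_i$ well defined in the surjectivity step. Otherwise the argument is direct; the main thing to handle carefully is that the gauge constraint $\sum_i r_i=S$ removes exactly the one-dimensional kernel direction $(1,\dots,1)$ of the softmax, which is what the injectivity computation shows.
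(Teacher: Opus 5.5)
Your proof is correct and follows essentially the paper's route: the paper also passes to $\bm r$, takes logarithms of the softmax equations, and uses the gauge $\sum_i r_i=S$ to fix the normalization constant, arriving at the same formula $r_i=\ln p_i+\frac{1}{2K}\bigl(S-\sum_j \ln p_j\bigr)$. You split the argument into injectivity and surjectivity and explicitly check that this $\bm r$ maps back to $\bm p$, which makes the existence half clearer than in the paper, where the formula is derived only as a necessary condition.
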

\begin{pf} 
    To prove that the restricted mapping $M|_{Q_S}$ is bijective, it is sufficient to show that a quasiprobability vector $\boldsymbol{q}=(q_1,\ldots,q_K)^\top$  can be  uniquely recovered from any probability vector $\boldsymbol{p}=(p_1,\ldots,p_{2K})^\top$. Indeed, if one introduces a vector $\boldsymbol{r}=(r_1,\ldots,r_{2K})^\top$ satisfying
	\beq
	\sum_{i=1}^{2K} r_i=S, \label{assumption1} 
	\eeq
	and
	\beq
	\frac{e^{r_i}}{\sum_{i=1}^{2K}e^{r_i}} =p_i,~~\forall~~i=1,\ldots,2K, \label{assumption2}
	\eeq
	then it reduces to show that there exists a unique $\bm{r}$ from $\bm{p}$, because the mapping from $\boldsymbol{r}$ to $\boldsymbol{q}$ is trivially bijective following Eq.~(\ref{eqa:r}).
	
	Performing the logarithm on both sides of Eq.~(\ref{assumption2}) yields 
	\begin{equation}
		r_i- \ln\left(\sum_{i=1}^{2K}e^{r_i}\right)=\ln p_i.
	\end{equation}
 	Introducing $C:=\ln(\sum_{i=1}^{2K}e^{r_i})$  as an unknown parameter and summing all $r_i$ gives rise to
	\begin{equation}
		\sum_{i=1}^{2K}r_{i}-2KC=S-2KC=\sum_{i=1}^{2K}\ln p_i,
	\end{equation}
	  and hence $C=(S-\sum_{i=1}^{2K}\ln p_i)/(2K)$.  As a consequence, each component of $\boldsymbol{r}$ is uniquely given by
	\begin{equation}
		r_i=\ln p_i + \frac{1}{2K}\left(S-\sum_{i=1}^{2K}\ln p_i\right).
	\end{equation}
	It immediately follows that the mapping from $\boldsymbol{p}$ to $\boldsymbol{r}$ is one-to-one, completing the proof. \hfill $\square$
\end{pf}

In the informationally complete case with $K=d^2$, the common-shift degree of freedom is automatically fixed by the inherent normalization $S=\sum_{k=1}^{d^2}[{\rm Re}(q_k)+{\rm Im}(q_k)]=1$. 
For $K<d^2$, this gauge-fixing value may be unavailable or not explicitly enforced, so the common-shift information is not retained by $\mathcal{M}$ alone and is not explicitly used in the CLRE objective. Nevertheless, as the available information is itself incomplete in this case, unique reconstruction cannot be guaranteed in general and CLRE still works as a practical heuristic objective to yield a candidate state estimate.

\section{Fast projected gradient descent algorithm}\label{sec:PGD}

We continue to present the PGD algorithm to efficiently solve the optimization problem in~(\ref{CLRE}). In particular, the update rule for state estimate at the $l$-th iteration is
\begin{equation}
	\tilde{\rho}_{l+1} = \mathcal{S}\Bigl[\tilde{\rho}_{l}
	- \alpha \,\nabla f_{\rm CLRE}(\tilde{\rho}_{l})\Bigr].
	\label{eqa:MLE iteration}
\end{equation}
    Here $\alpha$ is the step size and  $\mathcal{S}$ denotes the projection which maps the tentative update to the closest physical state with respect to the Frobenius norm, i.e., $\mathcal{S}[X] := \underset{\bar\rho\in\mathcal{L}}{\mathrm{argmin}}\, \|\bar\rho - X \|_F$~\citep{Sproj1,Smolin2012,Sproj3}. It reduces to the simplex projection of eigenvalues after diagonalizing the Hermitian $X$.  Furthermore, the gradient with respect to the state admits  the following  closed form.

\begin{propp}
	With observed KD quasiprobabilities $\hat{q}_{ab}$, the gradient  of $f_{\rm CLRE}(\tilde{\rho})$  is given as
	\begin{align}
		&\nabla f_{\rm CLRE}(\tilde{\rho})
		=-\frac{1}{2}\sum_{a,b}\Bigl[\Bigl(\frac{e^{{\rm Re}(\hat{q}_{ab})}}{\hat{N}}-\frac{e^{{\rm Re}(\tilde{q}_{ab})}}{\tilde{N}}\Bigr) \nonumber\\
		& \qquad +i\,\Bigl(\frac{e^{{\rm Im}(\hat{q}_{ab})}}{\hat{N}}-\frac{e^{{\rm Im}(\tilde{q}_{ab})}}{\tilde{N}}\Bigr)\Bigr]\,\Pi_b\Pi_a+ \text{h.c.},
		\label{mle gradient KD}
	\end{align}
	where $\tilde{N}$ and $\hat{N}$ are the normalization factors for the  predicted and measured KD, respectively, and $\text{h.c.}$ denotes the Hermitian conjugate of the first term. 
\end{propp}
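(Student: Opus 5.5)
We compute the gradient with respect to the Hermitian matrix $\tilde\rho$ by a first-order (directional-derivative) expansion, and identify the gradient through the real Hilbert--Schmidt inner product $\langle X,Y\rangle=\mathrm{Re}\,\mathrm{Tr}[X^\dagger Y]$ on Hermitian matrices. The starting point is the explicit form of $f_{\rm CLRE}$ in~(\ref{eqa:KDmle}): a term linear in ${\rm Re}(\tilde q_{ab})$ and ${\rm Im}(\tilde q_{ab})$, plus $\ln\tilde N$. Here $\tilde q_{ab}=\mathrm{Tr}[\tilde\rho\,\Pi_a\Pi_b]$ depends linearly on $\tilde\rho$. Throughout, $\hat N$ and $\hat q_{ab}$ are constants.

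\textbf{Step 1: derivatives of the real and imaginary parts.} By the Remark after Theorem~\ref{thm:KD-completeness}, $\tilde q_{ab}^*=\mathrm{Tr}[\tilde\rho\,\Pi_b\Pi_a]$ for Hermitian $\tilde\rho$. Therefore
\[
{\rm Re}(\tilde q_{ab})=\tfrac12\mathrm{Tr}\big[\tilde\rho\,(\Pi_a\Pi_b+\Pi_b\Pi_a)\big],\qquad
{\rm Im}(\tilde q_{ab})=\tfrac{1}{2i}\mathrm{Tr}\big[\tilde\rho\,(\Pi_a\Pi_b-\Pi_b\Pi_a)\big].
\]
Since $\mathrm{Tr}[\tilde\rho H]$ with $H$ Hermitian has gradient $H$, we obtain
\[
\nabla{\rm Re}(\tilde q_{ab})=\tfrac12(\Pi_b\Pi_a+\text{h.c.}),\qquad
\nabla{\rm Im}(\tilde q_{ab})=\tfrac12(i\,\Pi_b\Pi_a+\text{h.c.}).
\]
To check the second, note that $(i\Pi_b\Pi_a)^\dagger=-i\Pi_a\Pi_b$, so $\tfrac{1}{2i}(\Pi_a\Pi_b-\Pi_b\Pi_a)=\tfrac12(i\Pi_b\Pi_a-i\Pi_a\Pi_b)$. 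We fix this orientation convention up front, because whether the result involves $\Pi_b\Pi_a$ or $\Pi_a\Pi_b$, and with which sign of $i$, depends on it.

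\textbf{Step 2: chain rule for each term.} The linear term in~(\ref{eqa:KDmle}) contributes
\[
-\sum_{a,b}\Big[\tfrac{e^{{\rm Re}(\hat q_{ab})}}{\hat N}\,\nabla{\rm Re}(\tilde q_{ab})+\tfrac{e^{{\rm Im}(\hat q_{ab})}}{\hat N}\,\nabla{\rm Im}(\tilde q_{ab})\Big].
\]
For $\ln\tilde N$ with $\tilde N=\sum_{a,b}\big(e^{{\rm Re}(\tilde q_{ab})}+e^{{\rm Im}(\tilde q_{ab})}\big)$, the chain rule gives
\[
\nabla\ln\tilde N=\sum_{a,b}\Big[\tfrac{e^{{\rm Re}(\tilde q_{ab})}}{\tilde N}\,\nabla{\rm Re}(\tilde q_{ab})+\tfrac{e^{{\rm Im}(\tilde q_{ab})}}{\tilde N}\,\nabla{\rm Im}(\tilde q_{ab})\Big].
\]
This is the usual softmax–cross-entropy identity: the gradient is (model probability $-$ data probability) times the gradient of the logit.

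\textbf{Step 3: collect terms.} Substitute Step 1 into Step 2 and group by $\Pi_b\Pi_a$. Because the real coefficients multiply $\tfrac12(\Pi_b\Pi_a+\text{h.c.})$ and the imaginary ones multiply $\tfrac12(i\Pi_b\Pi_a+\text{h.c.})$, the sum becomes
\[
-\tfrac12\sum_{a,b}\big[c^{\rm Re}_{ab}+i\,c^{\rm Im}_{ab}\big]\Pi_b\Pi_a+\text{h.c.},
\]
where $c^{\rm Re}_{ab}$ and $c^{\rm Im}_{ab}$ are the differences (data $-$ model) of the corresponding softmax entries. This is exactly~(\ref{mle gradient KD}).

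The only delicate point is the convention in Step 1: which Hermitian gradient corresponds to the real inner product, and the ordering and sign bookkeeping for the imaginary part. We verify it by testing the directional derivative along an arbitrary Hermitian $\Delta$, checking that $\mathrm{Re}\,\mathrm{Tr}\big[(i\Pi_b\Pi_a)^\dagger\Delta\big]$-type terms reproduce $\mathrm{Im}\,\mathrm{Tr}[\Delta\Pi_a\Pi_b]$. Once that holds, the remaining steps are routine.
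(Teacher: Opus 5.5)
Your proof is correct and gives exactly the stated formula. It follows the same skeleton as the paper: write ${\rm Re}(\tilde q_{ab})$ and ${\rm Im}(\tilde q_{ab})$ as half-sums and half-differences of $\mathrm{Tr}[\tilde\rho\,\Pi_a\Pi_b]$ and $\mathrm{Tr}[\tilde\rho\,\Pi_b\Pi_a]$, apply the softmax chain rule through $\ln\tilde N$, and collect into $\Pi_b\Pi_a+\text{h.c.}$. Where you differ is in how the gradient itself is defined.

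The paper uses Wirtinger calculus. After using Hermiticity to rewrite the real and imaginary parts, it treats $\tilde\rho$ as a free complex matrix and applies $\nabla_A\mathrm{Tr}[AB]=B^\top$. This produces terms in $\Pi_b^\top\Pi_a^\top$ and $\Pi_a^\top\Pi_b^\top$, which the paper then complex-conjugates via $\nabla_{\tilde\rho^*}f=(\nabla_{\tilde\rho}f)^*$.

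You stay on the real vector space of Hermitian matrices with the inner product $\mathrm{Re}\,\mathrm{Tr}[X^\dagger Y]$. You read the gradient off directional derivatives, using only that $\mathrm{Tr}[\tilde\rho H]$ has gradient $H$ for Hermitian $H$.

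Your route has three advantages:
\begin{itemize}
\item It avoids the transpose-and-conjugate bookkeeping.
\item It makes precise which gradient the formula is: the Riesz representative in the same Frobenius geometry in which $\mathcal{S}$ projects, so (\ref{eqa:MLE iteration}) is a genuine projected-gradient step.
\item It sidesteps a normalization ambiguity in the paper's entrywise definition $\partial f/\partial\tilde\rho^*_{ij}=\tfrac12(\partial_{{\rm Re}}+i\,\partial_{{\rm Im}})$. Taken literally, that definition's value depends on how $f$ is extended off the Hermitian matrices and can introduce a factor $\tfrac12$ on some or all entries. The paper's shortcut of conjugating the holomorphic gradient happens to land on exactly your matrix.
\end{itemize}

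In exchange, the paper's route plugs directly into standard complex matrix-calculus identities and the Wirtinger convention common in complex-valued optimization.
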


\begin{pf}
	 Denote by $r_{ab}=\e^{{\rm Re}(\hat{q}_{ab})}/{\hat{N}}$ and $m_{ab}=e^{{\rm Im}(\hat{q}_{ab})}/{\hat{N}}$  the scaled real and imaginary components, respectively,  which remain constant during the update of~(\ref{eqa:MLE iteration}). Thus, the objective function in~(\ref{eqa:KDmle}) is rewritten as
		\begin{align}
			 f_{\rm CLRE}(\tilde{\rho})=-\sum_{a,b}\left[r_{ab}\,{\rm Re}(\tilde{q}_{ab})+m_{ab}\,{\rm Im}(\tilde{q}_{ab})\right] + \ln\tilde{N}. \label{objective}
    	\end{align}
	As $f_{\rm CLRE}$ is a real-valued function of complex matrix $\tilde{\rho}$, the Wirtinger calculus is adopted to compute its gradient and the corresponding Wirtinger gradient is given by $\nabla f_{\rm CLRE}(\tilde{\rho}) :=\nabla_{\tilde\rho^*}f_{\rm CLRE}(\tilde{\rho})= \partial f/\partial \tilde\rho^*$~\citep{li2008,schreier2010}, where $\tilde{\rho}^*$ is the complex conjugate of $\tilde{\rho}$ and the partial derivative is taken over entrywise with $(\partial f/\partial \tilde\rho^*)_{ij}=\partial f/\partial \tilde\rho^*_{ij}=(\partial f/\partial {\rm Re}(\tilde\rho_{ij})\!+\!i\partial f/\partial {\rm Im}(\tilde\rho_{ij}))/2$. Since the equality $\nabla_{\tilde\rho^*}f_{\rm CLRE}(\tilde{\rho})=(\nabla_{\tilde{\rho}}f_{\rm CLRE}(\tilde{\rho}))^*$ holds for any real-valued function, we can first compute $\nabla_{\tilde{\rho}}f_{\rm CLRE}(\tilde{\rho})$ and then obtain the desired Wirtinger gradient by applying the complex conjugate operation.

	Given Hermitian $\tilde{\rho}$, the relations  ${\rm Re}(\tilde{q}_{ab})\!=\!(\mathrm{Tr}[\tilde{\rho}\,\Pi_a\Pi_b]\!+\!\mathrm{Tr}[\tilde{\rho}\,\Pi_b\Pi_a])\!/2$ and ${\rm Im}(\tilde{q}_{ab})=(\mathrm{Tr}[\tilde{\rho}\,\Pi_a\Pi_b]-\mathrm{Tr}[\tilde{\rho}\,\Pi_b\Pi_a])/2i$ lead to
	\begin{align}
		\nabla_{\tilde{\rho}}\, {\rm Re}(\tilde{q}_{ab})&=\frac{1}{2}[(\Pi_a\Pi_b)^\top+(\Pi_b\Pi_a)^\top], \label{gradientRe}\\
		\nabla_{\tilde{\rho}} \,{\rm Im}(\tilde{q}_{ab})&=\frac{1}{2i}[(\Pi_a\Pi_b)^\top-(\Pi_b\Pi_a)^\top]. \label{gradientIm}
	\end{align}
	These equalities follow from the matrix calculus identity $\nabla_A \mathrm{Tr}[AB] = B^\top$  for any complex matrices $A$ and $B$. The gradient of $\tilde{N}\!\!=\!\sum_{a,b}\!\left[e^{{\rm Re}(\tilde{q}_{ab})}+e^{{\rm Im}(\tilde{q}_{ab})}\!\right]$ is obtained as
	\begin{align}
		\nabla_{\tilde{\rho}}\,\tilde{N}&=\,\frac{1}{2}\sum_{a,b}e^{{\rm Re}(\tilde{q}_{ab})}[(\Pi_a\Pi_b)^\top+(\Pi_b\Pi_a)^\top]\nonumber\\
		&+\frac{1}{2i}\sum_{a,b}e^{{\rm Im}(\tilde{q}_{ab})}[(\Pi_a\Pi_b)^\top-(\Pi_b\Pi_a)^\top]. \label{gradientN}
	\end{align}
	Combining Eqs.~(\ref{objective}) to (\ref{gradientN}) immediately yields
	\begin{align}
		&\nabla_{\tilde{\rho}}f_{\rm CLRE}(\tilde{\rho})\nn\\
		&=-\sum_{a,b}\left\{r_{ab}\nabla_{\tilde{\rho}}\, {\rm Re}(\tilde{q}_{ab})+m_{ab}\nabla_{\tilde{\rho}}\, {\rm Im}(\tilde{q}_{ab})\right\}+\frac{\nabla_{\tilde{\rho}}\tilde{N}}{\tilde{N}}\nonumber\\ 
		&=-\frac{1}{2}\!\sum_{a,b}\!\Bigl\{\!\Bigl[(r_{ab}\!-\!\frac{e^{{\rm Re}(\tilde{q}_{ab})}}{\tilde{N}})-i(m_{ab}\!-\!\frac{e^{{\rm Im}(\tilde{q}_{ab})}}{\tilde{N}}\!)\!\Bigr]\Pi_b^\top\Pi_a^\top\nonumber\\
		&~~~+\Bigl[(r_{ab}-\frac{e^{{\rm Re}(\tilde{q}_{ab})}}{\tilde{N}})+i(m_{ab}-\frac{e^{{\rm Im}(\tilde{q}_{ab})}}{\tilde{N}})\Bigr]\Pi_a^\top\Pi_b^\top\Bigr\}.\nn
	\end{align}
	
	Finally, applying the conjugate operation to $\nabla_{\tilde{\rho}}f_{\rm CLRE}(\tilde{\rho})$ yields~(\ref{mle gradient KD}) as desired. \hfill $\square$
\end{pf}

\begin{algorithm}[t]
	\caption{Projected gradient descent algorithm}
	\label{alg:KD_QST}
	\renewcommand{\algorithmicrequire}{\textbf{Input:}}
	\renewcommand{\algorithmicensure}{\textbf{Output:}}
	
	\begin{algorithmic}[1]
		\REQUIRE Observed KD quasiprobabilities $\hat{q}_{ab}$, measurement projectors $\Pi_a$, $\Pi_b$, step size $\alpha$
		\ENSURE State estimate $\tilde{\rho}^{\star}$
		
		\STATE Group all $\hat{q}_{ab}$ into vector $\bm{\hat{q}}$ and compute $\mathcal{M}(\boldsymbol{{\hat{q}}})$ according to~(\ref{mapping2})
		\STATE Initialize $\tilde{\rho}_0$  as a random density matrix 
		
		\FOR{$l$ from 0 to L}
		\STATE Compute the estimated KD vector $\bm{\tilde{q}}$  from $\tilde{\rho}_l$ by~(\ref{eqa:KD})
		\STATE Update $\mathcal{M}(\tilde{\bm{q}})$  by the complex softmax mapping in~(\ref{mapping2})
		\STATE Calculate $\nabla f_{\rm CLRE}(\tilde{\rho}_l)$ according to~(\ref{mle gradient KD})
		\STATE Update state estimate $\tilde{\rho}_{l+1}$ by~(\ref{eqa:MLE iteration})
		\ENDFOR
	\end{algorithmic}
\end{algorithm}

It is evident that the gradient in~(\ref{mle gradient KD}) is Hermitian by computation.  
The complete algorithmic procedure is summarized in Algorithm~\ref{alg:KD_QST} which offers additional computational advantages.  Specifically, conventional gradient-based algorithms for MLE involve evaluating $-\sum_{k}\hat{p}_{k}E_{k}/\tr{\tilde{\rho}E_k}$, which become numerically unstable when some predicted probabilities $\mathrm{Tr}[\tilde{\rho}E_k]$ are very small while empirical probabilities $\hat{p}_{k}$ remain nonzero. This situation typically happens when $\tilde{\rho}$ becomes rank deficient, which could be mitigated by $\epsilon$-flooring~\citep{Blume2010}, damping~\citep{DG-Rehacek2007}, or line search~\citep{Sproj2}.
By contrast, our CLRE avoids the $1/\tr{\tilde{\rho}E_k}$-type ratio by using residual (difference-form) weights uniformly bounded by $ \left|\Bigl(\tfrac{e^{{\rm Re}(\hat{q}_{ab})}}{\hat{N}}\!-\!\tfrac{e^{{\rm Re}(\tilde{q}_{ab})}}{\tilde{N}}\Bigr)\!+\!i\,\Bigl(\tfrac{e^{{\rm Im}(\hat{q}_{ab})}}{\hat{N}}\!-\!\tfrac{e^{{\rm Im}(\tilde{q}_{ab})}}{\tilde{N}}\Bigr)\right|\!<\!\sqrt{2}.$ This makes the update magnitude less sensitive to small predicted probabilities, thus improving numerical stability. In practice, it often admits smoother iterations and faster convergence, allowing for simple step-size schedules with fewer safeguards. It is remarked that this advantage is empirical and may depend on the state rank, noise, and initialization.

\section{Computational complexity of CLRE}\label{sec:computational complexity}

It follows directly from Algorithm~\ref{alg:KD_QST} that the computational complexity of PGD for CLRE  consists of  four components: KD quasiprobability computation, softmax mapping, gradient calculation, and state update. The softmax mapping in~(\ref{mapping2}) is element-wise  and costs  $\mathcal{O}(K)$. The state update in~(\ref{eqa:MLE iteration}) is dominated by the projection step $\mathcal{S}$,  whose  complexity amounts to an eigenvalue decomposition as $\mathcal{O}(d^3)$~\citep{Smolin2012,Wang2024factored}.  Naively, computing $\tr{\tilde\rho\,O}$ needs $\mathcal{O}(d^2)$ multiply-adds, so the complexity of computing $K=d^2$ quasiprobabilities $q_{k}$ is $\mathcal{O}(d^4)$. The same computational complexity can be obtained for gradient computation, by noting that it involves multiplying $d^2$ scalar weights with matrices and summing the total $K$ matrices.

To reduce the computational complexity, we consider systems with an $n$-partite tensor-product structure and homogeneous local dimension $d_{\mathrm{loc}}$. This is an additional assumption introduced only for the complexity reduction and is not required by the general KD reconstruction framework developed in the preceding sections. Under this structure, the tensor-product form of trace computation~\citep{CG-APG} can be exploited to accelerate the calculation of KD quasiprobabilities and gradients. Specifically, any operator on the composite system can be decomposed into a weighted sum of operators $O_w$ in the tensor product structure:
\beq
O_w=O_{w_1}\otimes O_{w_2}\otimes\cdots\otimes O_{w_n}, \label{tensorproduct}
\eeq
where $\{O_{w_s}\}_{w_s=1}^{d_{\rm loc}^2}$ is a local operator basis on the $s$-th subsystem and $s=1,\cdots, n$. If measurement projectors $\Pi_a$s and $\Pi_b$s for KD quasiprobability obey the above structure, these measurements are called local. 

We first consider that local KD operators are informationally complete with $K=d^2$, by showing that the computational complexity of both steps is reduced to $\mathcal{O}(n d^2)$. Particularly, the overall per-iteration cost is dominated by the projection step, which is implemented as a full dense Hermitian eigen-decomposition and scales as $\mathcal{O}(d^3)$. In our implementation, these dense linear-algebra routines are dispatched to parallel GPU compute units, reducing the constant factors and making the dense projection practically feasible in our high-dimensional benchmarks. We further extend the complexity analysis to informationally incomplete and the non-local measurement cases.

\subsection{Quasiprobability calculation with tensor-product structure}\label{sec:recursive-trace}

Denote the computational bases of the $s$-th subsystem by $|j_s\rangle \in\{|0\rangle,|1\rangle,\ldots,|d_{\rm loc}-1\rangle\}$. Any $n$-subsystem state $\rho$ can be  expanded in the tensor product structure as
\beq\label{eqa:rho decomp}
\rho = \mkern-10mu\sum_{\substack{j_1,\dots,j_{n-1}\\j^\prime_1,\ldots,j^\prime_{n-1}}}\mkern-10mu
{|j_1\rangle\langle j^\prime_1|\otimes\cdots\otimes|j_{n-1}\rangle\langle j^\prime_{n-1}|\otimes R_{\bm{j}_n\bm{j}_n^\prime}},
\eeq
where $\bm{j}_n:=(j_1,\dots,j_{n-1})$ and $\bm{j}_n^\prime:=(j_1^\prime,\ldots,j_{n-1}^\prime)$ with $j_s, j^\prime_s=0,\cdots, d_{\rm loc}-1$ and $s=1,\dots, n-1$. Here $R_{\bm{j}_n\bm{j}_n^\prime}$ is a $d_{\rm loc}\times d_{\rm loc}$ matrix on the $n$-th subsystem, defined by the partial trace $\tr{\rho \left(\ket{j^\prime_1}\bra{j_1}\otimes\cdots\otimes\ket{j^\prime_{n-1}}\bra{j_{n-1}} \otimes\mathbb{I}_{d_{\mathrm{loc}}}\right)}$.

The trace $\tr{\rho\,O_w}$ can be efficiently calculated by iteratively exploiting the tensor-product structure. Applying the $n$-th local operator $O_{w_n}$ to state $\rho$ yields a reduced matrix
\begin{align}
	&R^{(w_n)} \mkern-3mu=\mkern-17mu\sum_{\substack{j_1,\dots,j_{n-1}\\j^\prime_1,\ldots,j^\prime_{n-1}}}\mkern-15mu{|j_1\rangle\langle j^\prime_1|\otimes\cdots\otimes|j_{n-1}\rangle\langle j^\prime_{n-1}|\tr{R_{\bm{j}_n\bm{j}_n^\prime}O_{w_n}}} \nn\\
	&=\sum_{\substack{j_1,\dots,j_{n-2}\\j^\prime_1,\ldots,j^\prime_{n-2}}}|j_1\rangle\langle j^\prime_1|\otimes\cdots\otimes|j_{n-2}\rangle\langle j^\prime_{n-2}|\otimes R_{\bm{j}_{n-1}\bm{j}^{\prime}_{n-1}}^{(w_n)}\nn
\end{align}
with $\bm{j}_{n-1}=(j_1,\dots, j_{n-2}) $ and $\bm{j}^\prime_{n-1}=(j_1^\prime,\dots,j^\prime_{n-2})$. The second equality follows similarly from~(\ref{eqa:rho decomp}), and $R_{\bm{j}_{n-1}\bm{j}^{\prime}_{n-1}}^{(w_n)}=\sum_{j_{n-1},j^\prime_{n-1}}|j_{n-1}\rangle\langle j^\prime_{n-1}|\tr{R_{\bm{j}_n\bm{j}^\prime_n}O_{w_n}}$. 

Repeating the same contraction from the $n$-th subsystem to the $s$-th gives
\begin{align}
	&\quad R^{(w_{s}\cdots w_n)} \nn\\
&= \mkern-15mu\sum_{\substack{j_1,\dots,j_{s-1}\\j^\prime_1,\ldots,j^\prime_{s-1}}}\mkern-13mu
|j_1\rangle\langle j^\prime_1|\otimes\cdots\otimes|j_{s-1}\rangle\langle j^\prime_{s-1}|\tr{R_{\bm{j}_{s}\bm{j}^{\prime}_{s}}^{(w_{s+1}\cdots w_n)}O_{w_{s}}} \nn\\
&= \sum_{\substack{j_1,\dots,j_{s-2}\\j^\prime_1,\ldots,j^\prime_{s-2}}}|j_1\rangle\langle j^\prime_1|\otimes\cdots\otimes|j_{s-2}\rangle\langle j^\prime_{s-2}|\otimes R_{\bm{j}_{s-1}\bm{j}^\prime_{s-1}}^{(w_{s}\cdots w_{n})}, \nn
\end{align}
where $\bm{j}_{s}=(j_1,\dots, j_{s-1})$, $\bm{j}^\prime_{s}=(j_1^\prime,\dots,j^\prime_{s-1})$  and $s$ goes from $n-1$ to 1. Here $R_{\bm{j}_{s}\bm{j}^{\prime}_{s}}^{(w_{s+1}\cdots w_n)}=\sum_{j_{s},j^\prime_{s}}|j_{s}\rangle\langle j^\prime_{s}|\cdot\mathrm{Tr}[R^{(w_{s+2}\cdots w_n)}_{\bm{j}_{s+1}\bm{j}^\prime_{s+1}}O_{w_{s+1}}]$. Proceeding this procedure recursively for $n$ steps yields the expectation as desired $R^{(w_1\cdots w_n)}=\tr{\rho\, O_{w_1}\otimes\cdots\otimes O_{w_n}}=\tr{\rho\,O_w}$.

The computational advantage of this procedure lies in reusing $R^{(w_{s+1}\ldots w_n)}$ across all possible $w_s$, hence benefiting for computing all $d^2$ expectations $\tr{\rho\,O_w}$.  Specifically, computing one $R^{(w_s\cdots w_n)}$ from $R^{(w_{s+1}\cdots w_n)}$ requires evaluating $d_{\rm loc}^{2(s-1)}$ local traces, each costing $2d_{\rm loc}^2$ operations, so the total cost is 
$2d_{\rm loc}^{2(s-1)}\cdot d_{\rm loc}^2 = 2d_{\rm loc}^{2s}$. Further, enumerating all possible $O_w$ obtains that the number of $R^{(w_s\cdots w_n)}$ is up to $d_{\rm loc}^{2(n-s+1)}$. As a consequence, aggregating over all steps yields
$\sum_{s=1}^{n}2d_{\rm loc}^{2s}\cdot d_{\rm loc}^{2(n-s+1)}
=2n\,d_{\rm loc}^{2n+2}
=2d_{\rm loc}^2\,nd^2$
with $d=d^n_{\rm loc}$ for quantum systems with $n$ subsystems, leading to the computational complexity $\mathcal{O}(nd^2)$ for a complete KD quasiprobability distribution generated from local measurements.

\subsection{Gradient calculation with tensor-product structure}

The same idea can be applied to calculate the gradient in~(\ref{mle gradient KD}), which amounts to evaluating a weighted sum of operators in tensor-product structure as
\begin{equation}
	O = \sum_{w} c_w O_w = \sum_{w_1,\ldots,w_n} c_{w_1 \ldots w_n} O_{w_1} \otimes \ldots \otimes O_{w_n}.\label{eqa:A decomp}
\end{equation}
First, define partial sums over the last subsystem as $O^{(w_1\ldots w_{n-1})}=\sum_{w_n} c_{w_1\cdots w_n}O_{w_n}.$  Then, recursively extend to include the previous subsystem and define the partial sum operator acting on subsystems $(s+1,\ldots,n)$ as $O^{(w_1\cdots w_s)}=\sum_{w_{s+1}}O_{w_{s+1}} \otimes O^{(w_1\ldots w_{s+1})}$ for $s$ from $n-2$ down to 1. Combining them then gives rise to
\begin{align}\label{eqa:A j1j0}
	&\sum_{w_1} O_{w_1} \otimes O^{(w_1)}=\sum_{w_1,w_2} O_{w_1}\otimes O_{w_2} \otimes O^{(w_1w_2)}\nn\\
	\quad=&\cdots=\sum_{w_1,\cdots,w_n} c_{w_1 \ldots w_n} O_{w_1} \otimes \cdots \otimes O_{w_n}=O. \nn
\end{align}
To compute $O^{(w_1\cdots w_s)}$ at step $s=1,\cdots, n$, the tensor-product operation costs $d_{\rm loc}^{2(n-s+1)}$ and the matrix addition $(d_{\rm loc}^2-1)d_{\rm loc}^{2(n-s)}$. As the total number of $O^{(w_1\ldots w_s)}$ is up to $d_{\rm loc}^{2s}$, the per-step cost is given by $[d_{\rm loc}^{2(n-s+1)}+(d^2_{\rm loc}-1)d_{\rm loc}^{2(n-s)}]d_{\rm loc}^{2s} = (2d^2_{\rm loc}-1)d_{\rm loc}^{2n}.$
Thus, summing over all steps yields the total complexity 
$\sum_s (2d^2_{\rm loc}-1)d_{\rm loc}^{2n}=(2d^2_{\rm loc}-1)nd^{2}$,
also scaling to $\mathcal{O}(nd^2)$.

\subsection{Informationally incomplete and nonlocal cases}

The above technique can be applied to the informational incomplete case or $K<d^2$, by embedding unmeasured KD operators into a complete local KD-operator basis set. Evidently, the complexity $\mathcal{O}(nd^2)$ provides an upper bound for those of $K$ quasiprobabilities and the gradient. Additional savings could be achieved if some additional structure is further exploited, such as sparsity.

For nonlocal measurements, KD operators $O_k$ cannot be factorized as~(\ref{tensorproduct}), but can be expanded as $O_k=\sum_{\bm{w}}\alpha_{k,\bm{w}}\;O_{w_1}\otimes\cdots\otimes O_{w_n}$, where $\bm{w}=(w_1,\ldots,w_n)$. Correspondingly, $\tr{\rho O_k}=\sum_{\bm{w}}\alpha_{k,\bm{w}}\,{\rm Tr}[\rho\,O_{w_1}\otimes\cdots\otimes O_{w_n}]$. The $d^2$ basis terms ${\rm Tr}[\rho\,O_{w_1}\otimes\cdots\otimes O_{w_n}]$ can be computed by the above iterative procedure at a cost of $\mathcal{O}(nd^2)$.
Proceeding with $K$ weighted sums then costs $\sum_{k=1}^{K}s_k$, with $s_k:=\big|\{\bm{w}:\alpha_{k,\bm{w}}\neq 0\}\big|\in\{1,\ldots,d^2\}$ being the number of nonzero coefficients.
In the worst case (dense expansions), it recovers the $\mathcal{O}(d^4)$ scaling. The same idea also applies to the gradient term.

\section{Experiments and results}\label{sec:Numerical exp}

We implement a comprehensive experimental evaluation of our KD-based QST protocol by exploring algorithm convergence, computational efficiency, measurement overhead, and noise robustness. Our results demonstrate that the CLRE estimator with the PGD algorithm achieves fast state reconstruction with high accuracy, speeding up by approximately two orders of magnitude compared to the runtime of Pauli-based MLE in the tested benchmarks. We also examine the measurement resource consumption of the KD-based protocol and investigate its noise robustness on the Qiskit platform.

\subsection{Experimental setup}\label{sec:exp setup}

The $n$-qubit state $\rho$ with rank $r$ is randomly generated according to the Hilbert–Schmidt measure that recovers the Haar measure in the pure-state limit~\citep{Kumar2020}
\begin{equation}
\rho =  \frac{UU^{\dagger}}{\tr{UU^{\dagger}}},~\text{where}~U =U_{\rm Re}+iU_{\rm Im}\in\mathbb{C}^{d\times r}. \label{mixed state}
\end{equation}
The matrix elements of $U_{\rm Re}$ and $U_{\rm Im}$  are independently drawn from a standard normal distribution. The rank $r$ is uniformly sampled from $1$ to $2^n$ to comprehensively evaluate the protocol across states of varying purity, from pure to maximally mixed configurations.

The KD measurement operators used in this work are chosen as two mutually unbiased sets $\{\Pi_a\}=\{|0\rangle\langle{0}|, |{1}\rangle\langle{1}|\}^{\otimes n}$, and
$\{\Pi_b\}=\{|{+}\rangle\langle{+}|, |{-}\rangle\langle{-}|\}^{\otimes n}$, where $|0\rangle=(1,0)^\top$ and $|1\rangle=(0,1)^\top$ are eigenstates of $\sigma_z$, while $|{+}\rangle=1/\sqrt{2}(1,1)^\top$ and $|{-}\rangle=1/\sqrt{2}(1,-1)^\top$ are eigenstates of $\sigma_x$. 

\begin{figure}[t]
	\begin{center}
		\begin{overpic}[width=0.70\columnwidth]{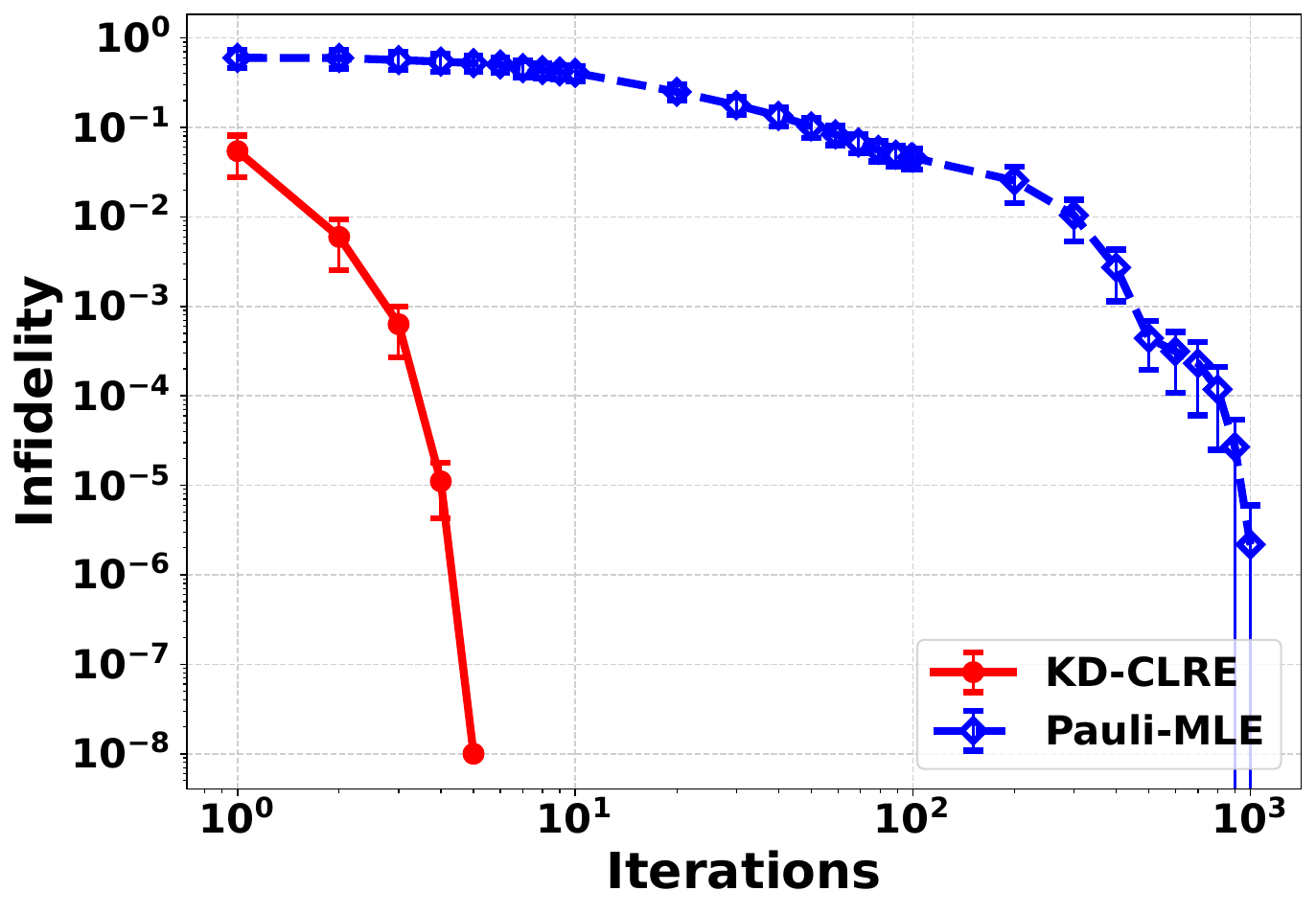}
			\put(-8,65){\textbf{(a)}}
		\end{overpic}
		\begin{overpic}[width=0.70\columnwidth]{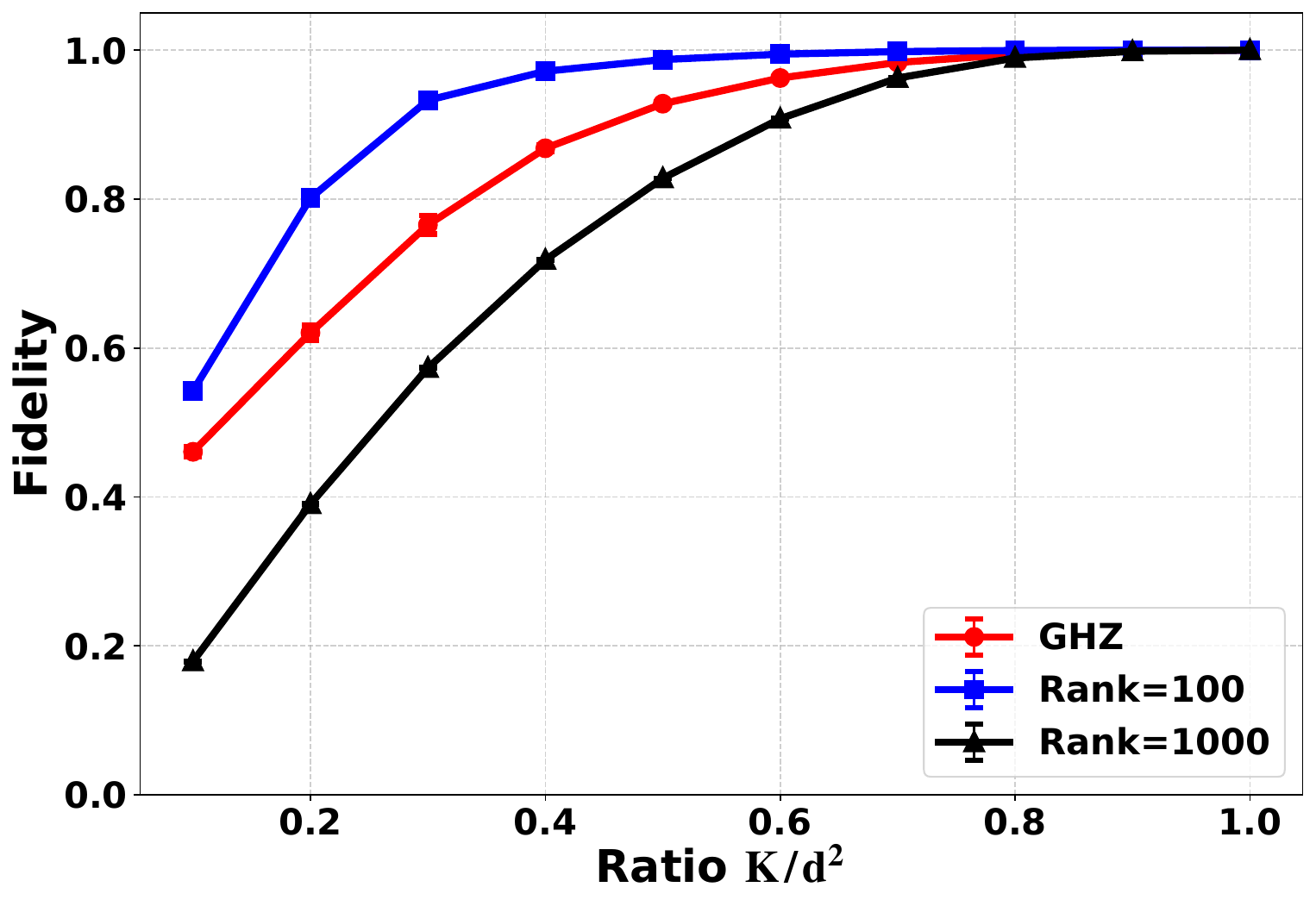}
			\put(-8,65){\textbf{(b)}}
		\end{overpic}
		\begin{overpic}[width=0.70\columnwidth]{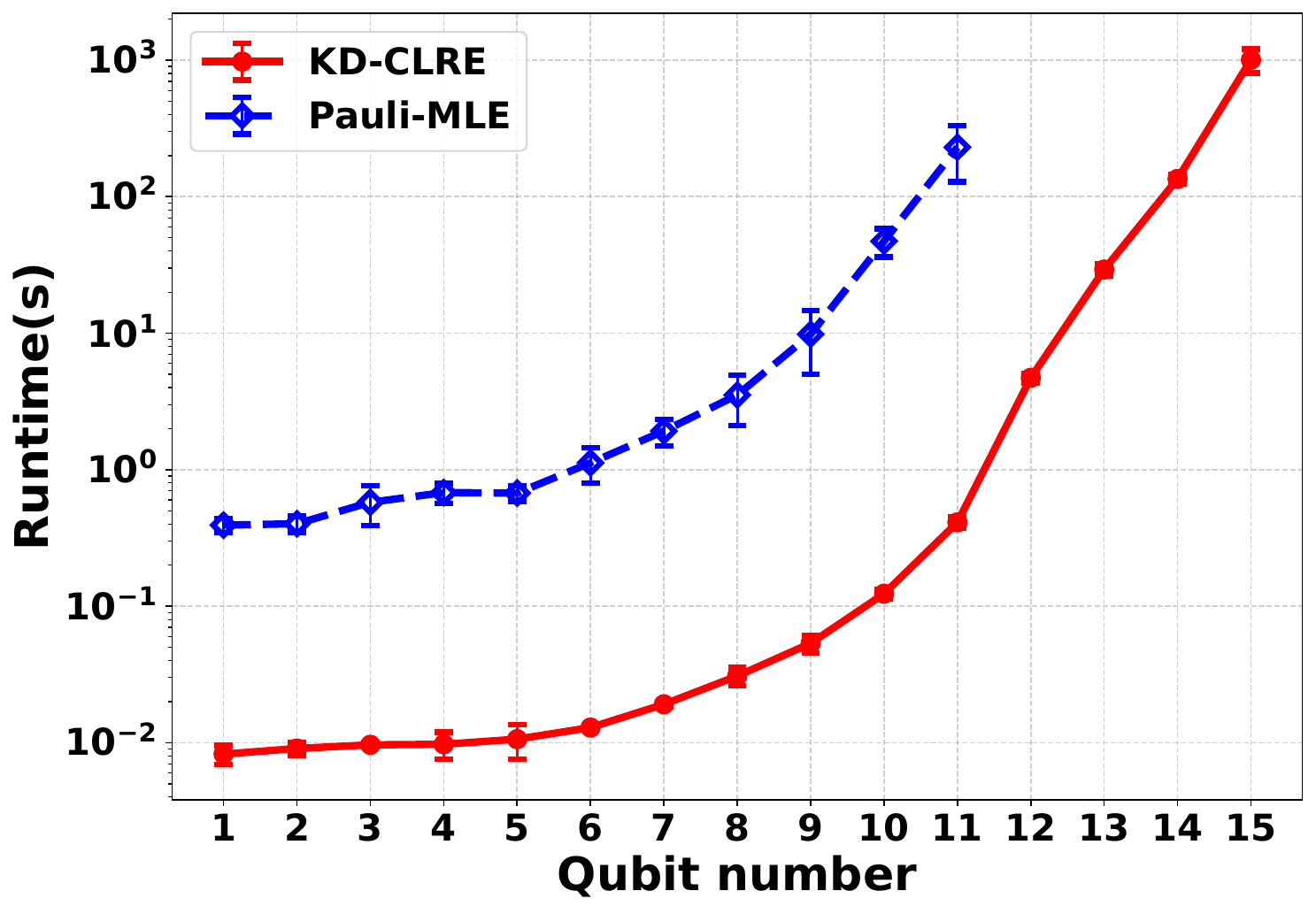}
			\put(-8,65){\textbf{(c)}}
		\end{overpic}
	\end{center}
	\caption{(a) State infidelity versus algorithm iteration for $10$-qubit states.  (b) Reconstruction fidelity of 12-qubit target states versus the used KD-operator ratio $K/d^2$. The curves correspond to a GHZ pure state and random states of rank 100 and 1000.  (c) Runtime for reconstructing states with the KD-based CLRE and with the Pauli-based MLE. Results are obtained by averaging over 20 random states for (a),  10 random datasets for (b) and 10 random states for (c), with error bars denoting standard deviation.}
	\label{fig:convergence runtime}
\end{figure}

 To benchmark our KD-based CLRE, the Pauli-based MLE is used as baseline~\citep{CG-APG}, in which the measurement bases are 
$\{|0\rangle,|1\rangle,|+\rangle,|-\rangle,|+i\rangle,|-i\rangle\}^{\otimes n}$,
with $|{+i}\rangle=1/\sqrt{2}(1,i)^\top$ and $|{-i}\rangle=1/\sqrt{2}(1,-i)^\top$ as eigenstates of $\sigma_y$. The accelerated PGD algorithm with conjugate-gradient initialization  is used to solve the optimization. The step size of Pauli-based MLE is adaptively chosen as in~\citet{Wang2024factored}, whereas the CLRE uses a fixed step size $\alpha=2d^3$. This choice of CLRE is motivated by a simple scaling heuristic: the state update in~(\ref{eqa:MLE iteration}) can be rewritten as  
$\tilde{\rho}_{l+1} = \mathcal{S}\Bigl[\tilde{\rho}_{l}
	- \alpha \|\nabla f_{\rm CLRE}(\tilde{\rho}_{l})\|_F\cdot\frac{\nabla f_{\rm CLRE}(\tilde{\rho}_{l})}{\|\nabla f_{\rm CLRE}(\tilde{\rho}_{l})\|_F}\Bigr]$, 
where $||\cdot||_F$ is the Frobenius norm. The effective step length is $\alpha_0=\alpha \|\nabla f_{\rm CLRE}(\tilde{\rho}_{l})\|_F$, automatically shrinking as the gradient decreases during iteration and thus improving stable convergence. 
The $d^3$ factor compensates for the rapid decay of $\|\nabla f_{\rm CLRE}(\tilde{\rho}_{l})\|_F$, and the prefactor $2$ is chosen empirically as the largest value such that the optimization process remains consistently stable without noticeable oscillations.

Both the MLE and CLRE algorithms are initialized with full-rank $n$-qubit states sampled via~(\ref{mixed state}), and terminated when fidelity change between successive iterations falls below $10^{-5}$. All experiments are performed on a computer equipped with an NVIDIA A100-SXM4 GPU (80G GPU memory and 48G CPU memory).  The code is available on https://github.com/xiang5074/KD-QST.

\subsection{Fast convergence of CLRE with PGD}

We first verify the feasibility of CLRE, by examining the convergence of the PGD algorithm. Particularly, numerical experiments are performed to reconstruct $10$-qubit states from ideal, noise-free data, and the corresponding performance is evaluated by infidelity 
${\rm IF}:= 1-F(\tilde\rho,\rho)$
with the fidelity function $F$ given in~(\ref{fidelity}). 

It is shown in Fig.~\ref{fig:convergence runtime}(a) that our CLRE illustrated as the red solid line achieves infidelity below $10^{-5}$ within 10 iterations, confirming its effectiveness in reconstructing quantum states with high accuracy. By contrast, the convergence of the Pauli-based MLE shown as the blue dashed line requires approximately 1000 iterations to reach an infidelity of $10^{-5}$, much more than that of CLRE. The fast convergence performance of CLRE with PGD arises from numerical stability enabled by the softmax mapping. Since the per-iteration calculation time is comparable for both methods, the fast convergence of CLRE leads to substantial runtime reduction, to be confirmed in the next subsection.

We also investigate its feasibility and performance in the informationally incomplete case, by examining how the reconstruction fidelity changes as the KD-operator ratio $K/d^2$ varies. For each ratio, we randomly use a subset of KD operators to reconstruct 12-qubit states. As shown in Fig.~\ref{fig:convergence runtime}(b), it remains feasible and stable that the reconstruction fidelity increases smoothly with the increasing number of KD operators and exceeds 0.9 at a ratio of 0.7. It is also shown that its performance depends on the target state, such as state rank and entanglement.

\subsection{Runtime speedup for state reconstruction}

\begin{figure}[t]
	\begin{center}
		\begin{overpic}[width=0.70\columnwidth]{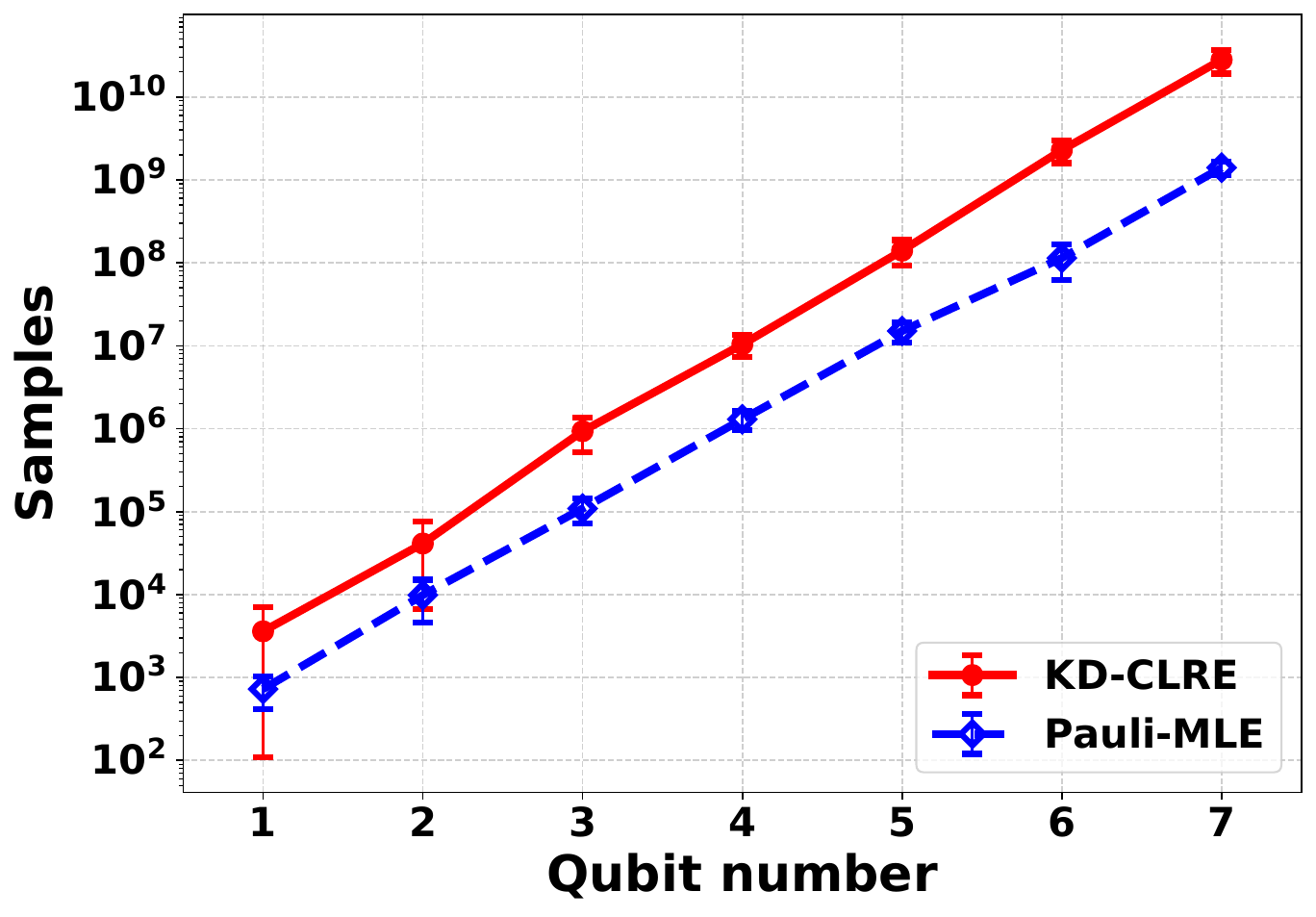}
			\put(-8,65){\textbf{(a)}}
		\end{overpic}
		\begin{overpic}[width=0.70\columnwidth]{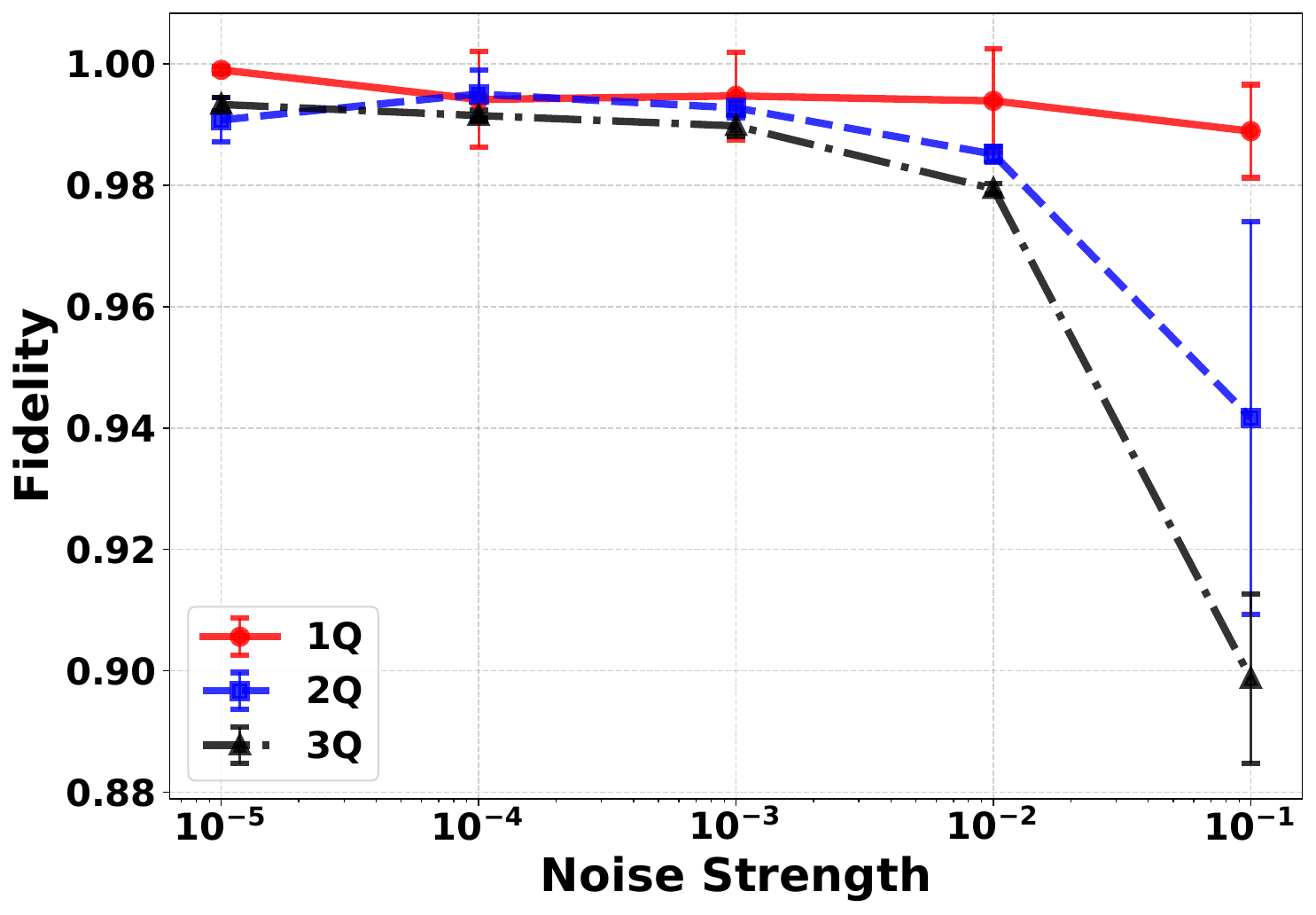}
			\put(-8,65){\textbf{(b)}}
		\end{overpic}
	\end{center}
	\caption{(a) Measurement samples needed to achieve a reconstruction fidelity of 0.99 versus the qubit number. (b) State fidelity of 1-, 2-, and 3-qubit states under depolarizing gate noise of different strengths using the KD-based QST. Each point is averaged over 10 random states, with error bars as standard deviation.}
	\label{fig:noise}
\end{figure}

We then investigate the computational efficiency of CLRE, by examining the  wall-clock  runtime of reconstructing algorithms with varying qubit number. The reported runtime excludes the time cost of state generation and fidelity evaluation. All results are obtained over 10 random states, and error bars indicate the standard deviations. 

As shown in Fig.~\ref{fig:convergence runtime}(c), CLRE achieves an empirical speedup of up to 100-fold compared with the Pauli-based MLE in the tested benchmarks. Under the GPU implementation described in Section~\ref{sec:exp setup}, it accomplishes the reconstruction of 15-qubit randomly generated mixed-state instances within 20 minutes. Moreover, for systems ranging from 1 to 15 qubits, it requires no more than 20 iterations to converge. As each PGD iteration requires one projection, the above convergence bound entails no more than 20 full dense eigen-decomposition calls, which dominate the wall-clock runtime for large systems. We also evaluated a practical stopping criterion, where the algorithm terminates when $\|\rho^{(l)}-\rho^{(l-1)}\|_F<10^{-4}$ for three consecutive iterations. With this criterion, CLRE exhibits a smaller yet still considerable empirical speedup of up to 50-fold compared to Pauli-based MLE.

\subsection{Measurement efficiency}

We next study the measurement overhead of our KD-based QST by examining how many total samples  and observables  are needed to reach a target reconstruction accuracy.  Here, a compact circuit-based protocol is used to generate KD data as proposed by~\citet{Cheng2025} where the target system is coupled to an ancilla of the same dimension and a control qubit, and the joint system then goes through a swap test. The real and imaginary parts of the KD quasiprobabilities can then be obtained by measuring the Pauli observables X and Y on the control qubit, respectively.  

It is shown in Fig.~\ref{fig:noise}(a) that more measurement samples are needed for this two-setting KD protocol to achieve fidelity of 0.99 than the Pauli-based baseline, about one order of magnitude across all system sizes. Despite this overhead, the sampling cost remains comparable and does not negate the advantage of minimal measurement reconfiguration. It is also noted that KD quasiprobabilities could be generated from other experimental routes with different sampling cost, and a comprehensive comparison is left for future work.

\subsection{Noise robustness on the Qiskit platform}

We finally implement experiments on the Qiskit Aer Simulator to evaluate the practical performance of our KD protocol. The depolarizing channel
\begin{equation}
	 \mathcal{D}_{d,\lambda}(\rho):=(1-\lambda)\rho+\frac{\lambda}{d}\mathbb{I}_d
\end{equation}
is used to model real device gate imperfections, where $\lambda$  denotes noise strength. The single-qubit gate noise strength is set as $\lambda=10^{-3}$, while multi-qubit gate noise strength varies from $0$ to $0.1$. Each circuit configuration is executed $10^{n+3}$ times to ensure sufficient statistical accuracy. 
As shown in Fig.~\ref{fig:noise}(b), the reconstruction fidelity remains above $0.98$ for  single qubit states even with a gate-noise strength of $10^{-1}$, and stays around $0.9$ under the same noise level for 3-qubit states. The observed result indicates that the KD-based QST protocol retains good reconstruction performance under noise conditions, making it a potential candidate for implementation on the current quantum devices.

\section{Conclusion}\label{sec:Conclusion}
We have established a novel framework for QST based on KD quasiprobability to reduce measurement burden and computational overhead in previous methods. It is first obtained that two complementary rank-one projective measurements  are sufficient to accomplish full state reconstruction, and then the CLRE estimator, together with the PGD algorithm, is developed to improve numerical stability and speed up convergence. Furthermore, the tensor-product structure of KD operators is utilized to reduce per-iteration computational cost under local product measurements. Finally, extensive numerical experiments are implemented to validate the proposed QST protocol. These findings not only establish the KD-based framework as a promising route for benchmarking large-scale quantum systems, but also open avenues for applying quasiprobability representations to other quantum identification tasks.

There are many interesting questions left for future work. For example, what is the optimal measurement choice for the KD-based QST? In this work, two mutually unbiased bases are tested by experiments without a rigorous proof, as they tend to have better numerical conditioning. It is also interesting to adapt the protocol to specific hardware constraints and to explore its integration with error mitigation techniques.



\bibliographystyle{references}        
\bibliography{refs}           

\end{document}